\pdfoutput=1
\documentclass[conference]{IEEEtran}
\usepackage{amsthm}
\newtheorem{definition}{Definition}
\newtheorem{theorem}{Theorem}

\usepackage{cite}
\usepackage{amsmath,amssymb,amsfonts}
\usepackage{algorithmic}
\usepackage{graphicx}
\usepackage{textcomp}
\usepackage{xcolor}
\usepackage{listings}
\usepackage{booktabs}
\usepackage[hidelinks]{hyperref}
\def\BibTeX{{\rm B\kern-.05em{\sc i\kern-.025em b}\kern-.08em
    T\kern-.1667em\lower.7ex\hbox{E}\kern-.125emX}}
\begin{document}

\title{Ghost Vectors: Soft-Deleted Embeddings Remain Reconstructible in HNSW Vector Databases}

\author{\IEEEauthorblockN{Chandranil Chakraborttii, Jackeline García Alvarado, Sitora Abdulofizova, and Shivanshu Dwivedi}
\IEEEauthorblockA{Trinity College, USA}}

\maketitle

\begin{abstract}
Retrieval-augmented generation (RAG) allows large language models to access external and private corpora for factual, domain-specific responses.  Modern RAG pipelines use hierarchical navigable small world (HNSW) vector databases for efficient similarity search. When a user requests data deletion, the systems typically only mark the record as deleted, leaving the embedding on disk physically unchanged. This soft-delete operation raises compliance concerns under data-erasure and retention requirements such as GDPR Article 17 and HIPAA. Analysis on three HNSW implementations confirms that deleted vectors remain physically recoverable by accessing the raw index files at the storage layer, bypassing API access. Using the Vec2Text inversion model without domain-specific fine-tuning, we show this vulnerability on multiple real-world datasets and data modalities. On Wikipedia biographical living persons dataset~(BLP), we successfully recover 25.5\% of exact person names and 46.4\% of geographic locations (ROUGE-L $0.185 \pm 0.062$). Recovery reaches 100\% for both patient age and gender markers (ROUGE-L $0.290 \pm 0.033$) on highly structured, sensitive data~(NIH Synthea dataset). On soft-deleted image embeddings, we show 100\% tissue classification on histopathology patches ($p=1.02\times10^{-7}$) and top-1 identity recovery reaches 99\% on facial embeddings~($p<10^{-12}$). This work introduces Epoch Key Rotation, which encrypts vectors and discards the key upon deletion. Epoch key rotation reduces observed PII recovery to 0\% and completes in 2.5 ms for 500 deleted vectors (approximately 0.005 ms/record). Additionally, it generates an ECDSA-signed cryptographic proof~($\pi$) as an auditable record of the deletion event. 
\end{abstract}

\begin{IEEEkeywords}
Vector databases, Soft-delete, HNSW, Privacy, Cryptographic Erasure
\end{IEEEkeywords}

\section{Introduction}
\label{sec:intro}
Retrieval-Augmented Generation (RAG) technologies are being rapidly deployed across highly regulated domains~(such as finance or healthcare) using HNSW-based vector databases in the backend to store and retrieve sensitive data. Regulations such as Article 17 of the General Data Protection Regulation (GDPR) allow users to request immediate and full removal of their data~\cite{gdpr2016,edpb2019erasure} from client systems. However, these implementations raise urgent regulatory concerns. Vector databases typically implement deletion as a `soft deletion' at the metadata level by design  to preserve performance. Under data minimization obligations~\cite{villaronga2018humans}, user data must not be retained beyond its intended purpose \cite{politou2018forgetting, villaronga2018humans} --- soft-delete is potentially inconsistent with these regulations indefinitely. Treating no response from an API as confirmation of deletion assumes that data has been removed, even though it may still physically exist. Query-side embedding transformations have been introduced to reduce the risk of inversion attacks; however, it  applies only to active queries and does not provide any protection once a vector enters secondary storage.

Our threat model operates at the secondary storage layer, where active query defenses are no longer effective. We assume an adversary has access to the raw storage data, bypassing all API restrictions. Since soft-delete only modifies API metadata, the vectors remain physically present in the HNSW index~(invisible to queries, but intact on disk). The adversary reads the HNSW binary index (e.g., ChromaDB’s index.bin file), extracts soft-deleted vectors using \texttt{get\_items()}, and feeds them into a pre-trained Vec2Text corrector~\cite{morris2023vec2text} without any fine-tuning. The recovered vectors are inverted back into plaintext, recovering sensitive information at higher rates than accepted GDPR and HIPAA thresholds.  We show the full attack pipeline in Figure~\ref{fig:pipeline} and describe it in detail in Section~\ref{sec:attack}. This work introduces \textbf{Epoch Key Rotation} as a storage-layer solution to this vulnerability. The approach encrypts vectors with AES-256-CTR and discards the key, reducing PII recovery to 0\% in our experiments. A signed cryptographic proof is also generated, providing an auditable record of the deletion event. This paper makes the following five contributions:

\noindent (1) We \textbf{empirically show} that soft-deleted HNSW vectors remain physically present, even after deletion requests are confirmed via API on three independent systems (ChromaDB, FAISS, and Weaviate).

\noindent (2) On the Wikipedia BLP dataset, 25.5\% of person names and 46.4\% of locations were recovered (n=98, ROUGE-L = 0.185 ± 0.062). On the NIH-standard Synthea dataset, recovery reached 100\% for age and gender markers under corrector-aligned clinical text. In the synthetic clinical dataset, patient surname recovery reached 49.9\% (CI: 46.8--53.0), indicating significant re-identification risk in our evaluation. On MIMIC-III, reconstruction remained low (ROUGE-L $<$ 0.010), suggesting that text reconstruction quality depends on corrector alignment. However, physical persistence is universal, representing a latent threat when stronger domain-adapted inversion models become available.

\noindent  (3) \textbf{The threat is indefinite}, since a backup snapshot can be used for attack with identical reconstruction quality (ROUGE-L=0.207). Vec2Text reconstructions transfer across three structurally different surrogate models (ROUGE-L$\geq$0.19, cosine similarity 0.86--0.90), confirming the attackers does not require knowledge of the victim's embedding model. The threat also shows low variance across the tested scales since ROUGE-L variance across {1K, 5K, 10K, 50K, 100K} remains at 0.016.

\noindent  (4) \textbf{The threat extends to image embeddings and structural metadata.} Soft-deleted facial images result in 99\% top-1 identity accuracy. Similarly, histopathology patches (PathMNIST, $N=1000$) are recovered at 100\% tissue-class accuracy ($p = 1.02 \times 10^{-7}$). Additionally, we show that ghost nodes remain structurally active in the HNSW graph, degrading live search results and creating a timing side-channel that reduces query latency by 4.3\% under true deletion (Table IX).

\noindent  (5) \textbf{Epoch key rotation reduces all PII recovery to 0\%} at a rate of 0.005ms/record and generates an ECDSA-signed cryptographic proof supporting auditable erasure workflows in regulated settings~(e.g, GDPR Art. 17 or HIPAA). For deployments requiring strict isolation, we later show that standard SQLite key deletion leaves key data recoverable from free pages, motivating hardware-isolated key storage (TPM/TrustZone).

\section{Background}
\label{sec:background}

\textbf{I. HNSW Vector Indexes and Soft-Deletion:}
In vector databases, HNSW graphs are the most common index structure for approximate nearest neighbor search~\cite{malkov2018hnsw,babenko2016efficient}. HNSW builds a multi-layer graph where each node represents a vector, edges connect it to its $M$ nearest neighbors, and layer-by-layer traversal takes $O(\log n)$ search time. To minimize latency, systems built on HNSW handle deletion with metadata flags, instead of erasing data from disk directly. For example, ChromaDB calls \texttt{delete(ids=[...])} to update an SQLite metadata table, with subsequent queries skipping the flagged entries~\cite{chromadb}. However, the raw \texttt{index.bin} file is not modified. FAISS \texttt{IndexHNSWFlat} provides no \texttt{delete()} method, so all vectors accumulate indefinitely~\cite{johnson2019faiss,douze2024faiss}. Weaviate~\cite{weaviate} and Pinecone~\cite{pinecone} follow the same pattern. Thus, soft-delete represents a deliberate design trade-off, not a bug in implementation.

\textbf{II. Text Embedding Inversion (Vec2Text):}
\label{sec:background_vec2text}
Inversion models such as Vec2Text~\cite{morris2023vec2text} have been developed to reconstruct embeddings back into semantically equivalent text. Starting from a candidate sequence produced by beam search, the corrector runs T iterative steps, each reducing the embedding distance between the candidate and the target vector. A pre-trained corrector (\texttt{gtr-base}) learns to associate token-level modifications using embedding residuals, with the weights staying frozen at inference. Since Vec2Text generates plain text than model-specific artifacts, it can invert embeddings across different architectures~(without knowing the victim's exact embedding model). The \texttt{gtr-t5-base}~\cite{ni2021large} retriever used in this work is a pretrained dense retrieval model~\cite{reimers2019sentence}. Prior work reports high reconstruction accuracy when training and evaluation are performed on closely matched text distributions. However, they have not been studied on soft-deleted records inside a live vector database.

\textbf{III. GDPR Article 17 and Verifiable Erasure: }
GDPR Art.~17(1) grants data subjects the right to erasure `without undue delay'~\cite{gdpr2016}. EDPB Guidelines 05/2019~\cite{edpb2019erasure} specify that erasure must be both \emph{verifiable} and \emph{irreversible}; suppressing records from query results alone does not satisfy this requirement. Art.~5(1)(e) additionally limits retaining data beyond its intended purpose~\cite{gdpr2016}. Current soft-delete designs may violate this indefinitely, regardless of stated policy. In these systems, enforcement is treated as a self-certification. However, from a systems perspective, functional deletion (i.e., an API returns no record) and storage-layer erasure are not equivalent. After a soft-delete command, a controller may treat the API response as sufficient evidence of deletion without checking the underlying state. Even with vectors physically present on the disk, the controller can still truthfully report that the API has no records for that ID. This paper identifies these two gaps and proposes a storage-layer solution to address them.

\begin{figure*}[t]
\centering
\scalebox{0.9}{
\includegraphics[width=\textwidth]{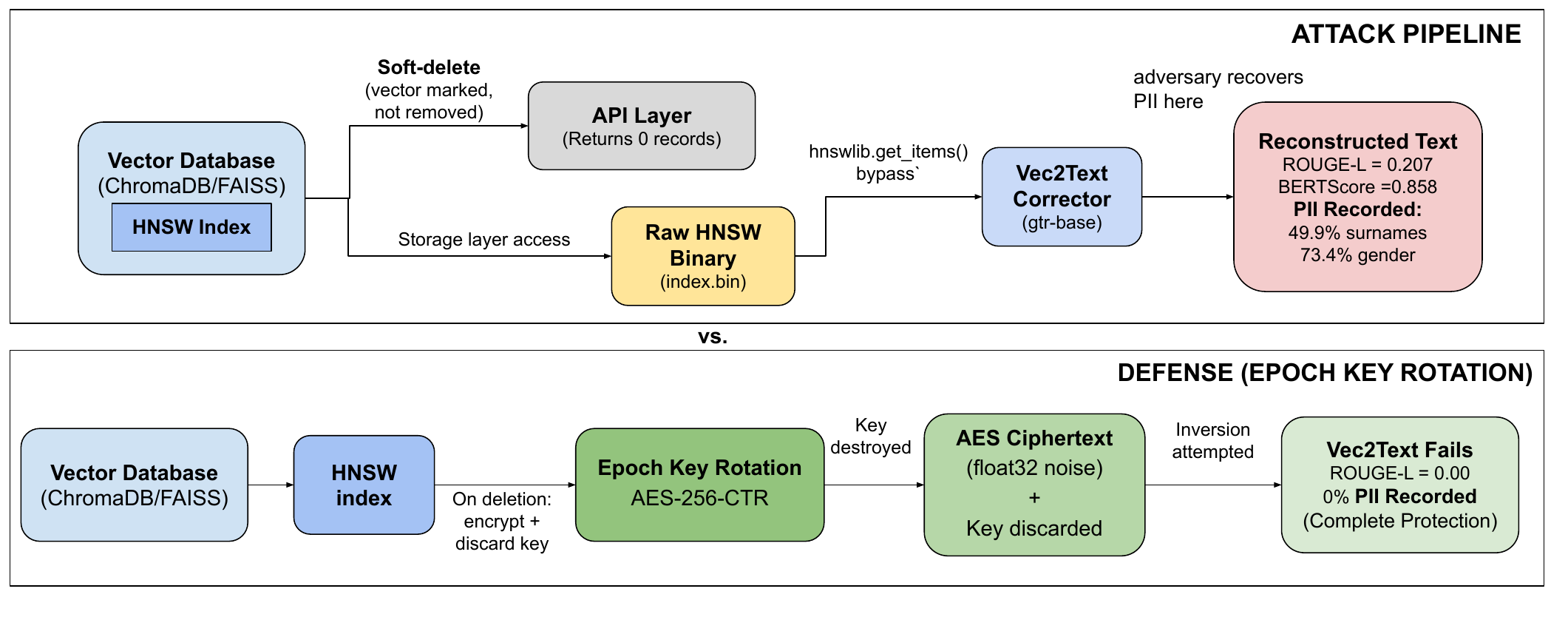}}
\vspace{-0.5cm}
\caption{Ghost vector attack pipeline (top) and epoch key rotation defense (bottom)}
\label{fig:pipeline}
\end{figure*}

\section{Threat Model}
\label{sec:threat_model}

\subsection{System and Adversary Model}
\textbf{Threat Model}: 
Three actors participate in our threat model: (i) a \textbf{Data Subject} that exercises the right to erasure under modern data complaince standards; (ii) a \textbf{Data Controller} operating a RAG system on top of an HNSW-based vector store (Weaviate, FAISS, or ChromaDB) who issues a soft-delete in response to deletion requests and treats the no API response as evidence of erasure; and (iii) a \textbf{Breach Adversary} with read access to raw storage who acts after the deletion has been recorded.  

The Data Controller embeds documents using the \texttt{gtr-t5-base} encoder (\(d=768\)) to build an HNSW index for RAG queries. After soft-delete, the API correctly returns no records for the removed IDs — but the controller performs no check at the storage layer, treating API silence as evidence of deletion.

\textbf{Adversary Model:} The breach adversary has \textbf{read access} to the raw storage directory (e.g, insider access, cloud storage mis-configuration, or backup data access). The adversary has \textbf{knowledge} of the embedding model that was used (often publicly available from system docs). Its \textbf{goal} is to reconstruct the semantic content of deleted data, i.e., records that the controller has declared erased. The plaintext, decryption keys, and API endpoints remain inaccessible.

Backup systems keep HNSW binary files regardless of API state, and entire database directories have been exposed in prior storage breaches~\cite{potluri2024deep}. A malicious insider (such as a database administrator or a cloud provider employee) can access raw files without causing API audit logs to be triggered.  Copying \texttt{index.bin} once is sufficient. The threat is not limited by deletion time, as soft-deleted vectors can remain unchanged in a backup for months after deletion.  The attacker takes advantage of the indefinite window that starts at deletion and ends only when the storage media is either physically destroyed or cryptographically overwritten. We empirically validate the indefinite threat window and scope of this threat model in Section~\ref{sec:persistence}.

\subsection{Formal Security Analysis}
\label{sec:proofs}
We formalize two main claims: (1) soft-delete leaves recoverable semantic information in storage, and (2) epoch key rotation removes that information under standard cryptographic assumptions. Both are expressed in terms of the Semantic Recovery Advantage~(SRA). We use $\mathcal{D}$ and $\mathcal{E}$ to represent a deletion process and an embedding function respectively. We define SRA of adversary $\mathcal{A}$ as:
\[
\mathsf{Adv}^{\mathsf{sem}}_{\mathcal{A}}(\lambda) = \Pr\left[\mathsf{EntityRecovery}\left(\mathcal{A}(\mathcal{D}(\mathcal{E}(x)))\right) \geq \theta\right]
\]
where $\theta = 0.2$ is the re-identification risk threshold, consistent with HIPAA Safe Harbor~\cite{hhs2012} to denote meaningful semantic recovery, and $x$ represents a plaintext record.

\begin{theorem}[Soft-Delete Risk]
\label{thm:softdelete_insecurity}
$\mathsf{Adv}^{\mathsf{sem}}_{\mathcal{A}}(\lambda)$ is empirically non-negligible under soft-delete ($\mathcal{D} = \mathsf{SoftMark}$), due to the persistence of the original embedding vector in raw storage (e.g., 0.499 surname recovery in Table~\ref{tab:pii_recovery}).
\end{theorem}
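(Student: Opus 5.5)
The argument is constructive: exhibit one concrete adversary $\mathcal{A}^\star$ whose advantage is bounded below by a constant, then read that constant off the experiments. The bound splits into a \emph{persistence} step, a \emph{reduction} step, and an \emph{empirical lower bound} step. Since $\mathsf{Adv}^{\mathsf{sem}}_{\mathcal{A}}$ is a supremum-style quantity over adversaries, one adversary with constant success already makes it non-negligible.

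\emph{Persistence.} We show that $\mathsf{SoftMark}$ acts as the identity on the storage-resident vector. Concretely, $\mathcal{D}(\mathcal{E}(x))$ contains the bytes of $\mathcal{E}(x)$ unchanged, because the deletion only writes a tombstone flag into the SQLite metadata table (ChromaDB) or a deletion bitmap (Weaviate). FAISS \texttt{IndexHNSWFlat} has no delete operation at all. Hence there is a deterministic, efficient extractor $\mathsf{Ext}$ with $\mathsf{Ext}(\mathcal{D}(\mathcal{E}(x))) = \mathcal{E}(x)$: it reads \texttt{index.bin} and calls \texttt{get\_items()} on the deleted label. This fact is a property of the implementation, so we justify it by inspecting the file format and by checking byte-identity before and after deletion on all three systems. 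We would cite the persistence experiments of Section~\ref{sec:persistence} here.

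\emph{Reduction.} Define $\mathcal{A}^\star = \mathsf{Vec2Text}\circ\mathsf{Ext}$. By persistence,
\[
\mathsf{Adv}^{\mathsf{sem}}_{\mathcal{A}^\star}(\lambda) = \Pr_x\bigl[\mathsf{EntityRecovery}(\mathsf{Vec2Text}(\mathcal{E}(x)))\geq\theta\bigr].
\]
So the advantage against soft-deleted data equals the inversion advantage against \emph{undeleted} embeddings. Deletion therefore contributes zero protection, and the quantity depends only on the data distribution and the inverter, not on $\lambda$. A quantity that is constant in $\lambda$ is non-negligible as soon as it is positive.

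\emph{Empirical lower bound.} We instantiate $x$ with the synthetic clinical records and take $\mathsf{EntityRecovery}$ to be surname recovery. The observed rate is $0.499$ with 95\% CI $[0.468,0.530]$, which lies well above $\theta=0.2$; the Synthea age and gender markers at 100\% serve as corroboration. The main obstacle is that this lower bound is statistical, not a proof of the claim for all distributions. First, we would argue the CI excludes any value below $\theta$ with overwhelming confidence by a Hoeffding bound on $n$ samples. Second, we would state explicitly that the theorem is existential, holding for these distributions; MIMIC-III shows the constant can be small for misaligned correctors, but persistence still guarantees the advantage is at least that of the best available inverter.
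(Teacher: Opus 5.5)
Your proposal is correct and follows the paper's route. The paper gives no separate formal proof: it rests the claim on the persistence experiments of Section~\ref{sec:persistence} and the measured $0.499$ surname-recovery rate of Table~\ref{tab:pii_recovery}, and your $\mathcal{A}^\star=\mathsf{Vec2Text}\circ\mathsf{Ext}$ with the $\lambda$-independence remark just makes that implicit argument explicit.

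Two minor corrections. First, the paper extracts vectors only from ChromaDB (verified by cosine $>0.999$, not byte-identity) and FAISS, and infers Weaviate persistence indirectly from disk growth because its format blocked extraction. So scope $\mathsf{Ext}$ and your exact-equality step to those systems; exact equality is not needed anyway, since the reported rate comes from inverting the extracted vectors. Second, non-negligibility needs only a positive $\lambda$-independent advantage, so comparing $0.499$ (and your Hoeffding bound) against $\theta$ conflates the per-record threshold inside the probability with the advantage itself.
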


\begin{theorem}[Epoch Rotation Security]
\label{thm:epoch_security}
Assuming AES-256-CTR is used with securely managed and irrecoverably deleted keys, ($\mathcal{D} = \mathsf{EpochRotate}$),
$\mathsf{Adv}^{\mathsf{sem}}_{\mathcal{A}}(\lambda)$ is negligible in $\lambda$ for any PPT adversary $\mathcal{A}$.
\end{theorem}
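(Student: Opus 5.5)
The plan is a standard hybrid argument that reduces the semantic recovery event to the pseudorandomness of AES-256 viewed as a PRF. Under $\mathcal{D}=\mathsf{EpochRotate}$, the adversary's view of a deleted record is $c=(\mathit{IV},\,\mathcal{E}(x)\oplus F_K(\mathit{IV}\|1)\|\cdots)$. Here $K$ is the epoch key, which by hypothesis is irrecoverably deleted and never enters the adversary's view, and $\mathit{IV}$ is a fresh counter. The view also contains ciphertexts of other records under the same $K$, any rotated live vectors encrypted under fresh keys $K'$ independent of $K$, and the public ECDSA proof $\pi$. I would first check that $\pi$ signs only the deletion metadata (record IDs, epoch number, a commitment or hash over ciphertext and timestamp) and not anything derived from $K$. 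If so, $\pi$ is simulatable from public data and can be dropped from the argument.

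\textbf{Hybrid $H_0$.} This is the real experiment. The adversary $\mathcal{A}$ receives the raw storage snapshot, outputs a reconstruction $\hat{x}=\mathcal{A}(\mathcal{D}(\mathcal{E}(x)))$, and wins if $\mathsf{EntityRecovery}(\hat{x})\ge\theta$.

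\textbf{Hybrid $H_1$.} Replace $F_K$ by a truly random function on all counter blocks used in the deleted epoch. Because counters never repeat, the keystream becomes uniform. A distinguisher between $H_0$ and $H_1$ gives a PRF adversary $\mathcal{B}$ with the same advantage. $\mathcal{B}$ samples $x$ itself, computes $\mathcal{E}(x)$, queries its oracle on the counters to build the ciphertexts, generates everything else (other epochs' keys, the signing key, the proof), runs $\mathcal{A}$, and evaluates $\mathsf{EntityRecovery}$. Evaluating $\mathsf{EntityRecovery}$ needs $x$, which $\mathcal{B}$ knows, so $\mathcal{B}$ is PPT whenever $\mathcal{A}$ and $\mathcal{E}$ are. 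Hence $|\Pr[H_0]-\Pr[H_1]|\le \mathsf{Adv}^{\mathsf{prf}}_{\mathrm{AES}}(\mathcal{B})=\mathsf{negl}(\lambda)$.

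\textbf{Hybrid $H_2$.} Replace the ciphertext by a uniform string of the same length. This step is perfect (a one-time pad), so $\Pr[H_1]=\Pr[H_2]$. In $H_2$ the adversary's view is statistically independent of $x$.

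The step I expect to be the main obstacle is bounding $\Pr[H_2]$ by a negligible quantity, which is what Theorem~\ref{thm:epoch_security} requires. In $H_2$, $\hat{x}$ is independent of $x$, so $\Pr[H_2]$ is the probability that an $x$-independent guess matches at least a $\theta$ fraction of the entities in $x$. To make this negligible I would read the statement, as the definition of $\mathsf{Adv}^{\mathsf{sem}}_{\mathcal{A}}$ implicitly intends, as concerning records whose identifying entities carry super-logarithmic min-entropy in $\lambda$. This matches the setting of unique PII such as names and identifiers, which is the recovery the SRA is designed to capture. I would make this explicit as part of the hypothesis, i.e.\ that the entity tuple of $x$ has min-entropy $\omega(\log\lambda)$ even after fixing any $\theta$-fraction subset structure, since the threshold is a fixed constant. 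With that in place:
\begin{itemize}
\item A union bound over the at most $\binom{n}{\lceil\theta n\rceil}$ choices of which entities are matched gives $\Pr[H_2]\le \binom{n}{\lceil\theta n\rceil}\,2^{-\omega(\log\lambda)}$.
\item The number of entities $n$ per record is a constant, so this bound is negligible.
\end{itemize}
Chaining the three hybrids then gives $\mathsf{Adv}^{\mathsf{sem}}_{\mathcal{A}}(\lambda)\le \mathsf{Adv}^{\mathsf{prf}}_{\mathrm{AES}}+\mathsf{negl}(\lambda)=\mathsf{negl}(\lambda)$.

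The first thing I would verify while writing this up is that no other artifact breaks the independence in $H_2$. The candidates are the HNSW graph edges of ghost nodes (Section~\ref{sec:persistence} notes they remain structurally active) and the timing side channel. Either of these could leak information about $\mathcal{E}(x)$ through neighbor structure rather than through the vector itself. If they do, the theorem holds only for the vector-payload view. In that case I would state that $\mathsf{EpochRotate}$ also rewires or removes the deleted node's adjacency lists, or else restrict $\mathcal{A}$'s input to the encrypted vector store as the definition $\mathcal{A}(\mathcal{D}(\mathcal{E}(x)))$ literally does.
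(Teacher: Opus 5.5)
Your $H_0\to H_1$ step is the paper's proof. The paper's distinguisher $\mathcal{B}$ passes the challenge ciphertext $c^*$ to $\mathcal{A}$ as a float32 vector and outputs 1 iff entity recovery reaches $\theta$, appealing to the real-or-random (IND-CPA) security of AES-256-CTR. You run the same reduction through the PRF property, and you make explicit two things the paper leaves tacit: $\mathcal{B}$ must choose $x$ itself to evaluate $\mathsf{EntityRecovery}$, and $\pi$ and the SQLite metadata can be simulated without $K$.

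The genuine difference is $H_2$. The paper handles the ideal world with a bare assertion: $\mathcal{A}$ ``should not recover meaningful semantic content with non-negligible probability when $c^*$ is truly random.'' You correctly see that this is a property of the distribution of $x$, not of AES, and you add a min-entropy hypothesis to secure it. That hypothesis is necessary, not cosmetic. $\mathsf{Adv}^{\mathsf{sem}}$ is an absolute success probability rather than an advantage over blind guessing, so records with low-entropy fields break the statement as written. If $\mathsf{EntityRecovery}$ is the fraction of $x$'s entities matched, a correct gender and diagnosis already give $2/9\ge\theta$ on the nine fields of Table~\ref{tab:pii_recovery}. An $x$-independent guess achieves that with constant probability.

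So the paper's route buys brevity by hiding an assumption that is false for its own clinical data. Yours buys a true statement, but at the price of a hypothesis that no fixed record distribution satisfies, since entity entropy does not grow with $\lambda$. A cleaner endpoint, needing no hypothesis, is to stop after $H_2$ with $\mathsf{Adv}^{\mathsf{sem}}_{\mathcal{A}}(\lambda)\le p^{*}+\mathsf{Adv}^{\mathsf{prf}}_{\mathrm{AES}}(\mathcal{B})$, where $p^{*}=\max_{y}\Pr_x[\mathsf{EntityRecovery}(y;x)\ge\theta]$ is the best blind-guess rate. The theorem is then exactly the case $p^{*}=\mathsf{negl}(\lambda)$.

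If you keep the min-entropy route, two repairs are needed.
\begin{itemize}
\item The per-subset union bound needs a factor $m^{|S|}$ for the up to $m$ candidate entity strings in $\hat{x}$. This is harmless, since $m=\mathrm{poly}(\lambda)$ for PPT $\mathcal{A}$ and $|S|$ is constant.
\item The min-entropy must be conditioned on the rest of the adversary's view: the plaintext \texttt{user\_id} in SQLite and in $\pi$, and any records still live. Otherwise your claim that the $H_2$ view is independent of $x$ is unjustified.
\end{itemize}
Your caveat about ghost-node adjacency lists is well placed. Section~\ref{sec:ghost_traversal} concedes that rotation leaves the node and its edges in the graph, so the theorem covers only the literal input $\mathcal{D}(\mathcal{E}(x))$ unless graph repair is added.
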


\begin{proof}
Assume $\mathcal{A}$ achieves non-negligible semantic advantage over epoch key rotation. We construct a distinguisher $\mathcal{B}$ against AES-256-CTR as follows: \\
The AES-256-CTR challenger sends a challenge ciphertext $c^*$ to $\mathcal{B}$, which converts it into a float32 embedding vector and sends it to $\mathcal{A}$. $\mathcal{B}$ outputs 1 if entity recovery from $\mathcal{A}(c^*)$ exceeds $\theta$; otherwise 0. 
$\mathcal{A}$ should not recover meaningful semantic content with non-negligible probability when $c^*$ is truly random. By contradiction, suppose \(A\) could still recover meaningful semantic content from ciphertext produced under epoch rotation with non-negligible probability. Because AES-CTR is widely considered as IND-CPA secure under the standard PRP assumptions~\cite{bellare1997relations,nist2001aes,bellare2000authenticated}, this resultant gap gives $\mathcal{B}$ a non-negligible advantage in distinguishing AES-256-CTR output from random.  Therefore, under these assumptions, no PPT $\mathcal{A}$ achieves non-negligible semantic recovery advantage under epoch rotation.
\end{proof}

The bound in Theorem~\ref{thm:epoch_security} does not depend on Vec2Text or any specific corrector. Future inversion models cannot weaken it, because the ciphertext carries no semantic structure for them to recover. This argument depends on the security of the underlying AES-256 encryption scheme and on correct key disposal.
However, the argument fails if the system leaves the decryption key on disk. If the decryption key remains accessible on disk somewhere, every guarantee in Theorem~\ref{thm:epoch_security} no longer applies. Theorem~\ref{thm:epoch_security} requires erasing epoch keys permanently and irrevocably. SQLite's standard \texttt{DELETE} simply marks rows for overwrite without zeroing disk sectors, leaving key material recoverable from free pages. Hardware isolation (ARM TrustZone, HSM) or strict memory zeroing before deletion (\texttt{OPENSSL\_cleanse()}) in deployments is needed for full-proof secure key disposal.

\section{Ghost Vector Attack Setup}
\label{sec:attack}
\subsection{Threat Environment and Setup}
\label{sec:setup}
The attack is evaluated across three HNSW implementations and multiple text and multimodal datasets. Figure~\ref{fig:pipeline} shows the full~(attack and defense) pipeline, including storage-layer extraction, Vec2Text inversion, and epoch key rotation. Vector extraction runs entirely from Python scripts that read the on-disk index files directly, never touching the database query API or any defense layered on top of it. All experiments were conducted with PyTorch 2.0 on a dedicated node equipped with AMD Ryzen 7960X CPU with 96GB system RAM, and a NVIDIA RTX A6000 GPU with 48 GB VRAM.

\subsection{Evaluated Systems}
We evaluate ChromaDB (using \texttt{hnswlib}), FAISS (using \texttt{IndexHNSWFlat}), and Weaviate (in embedded mode).  For each system, we insert 5,000 embedded records, and soft-delete 10\% of each dataset using \texttt{default\_rng(43)}. Either \texttt{index.reconstruct\_n()} (for FAISS) or \texttt{hnswlib.get\_items()} (for ChromaDB) are used to extract vectors directly at the storage layer.

\subsection{Datasets}
\label{sec:setup_text}

\subsubsection{Text Datasets:}
(1) The baseline dataset used in Section 3 consists of the first 5,000 articles from wikimedia/wikipedia 20231101.en, trimmed to 512 characters and embedded using \texttt{gtr-t5-base} with mean pooling and L2 normalization. Reconstruction is evaluated across $N \in {1\text{K}, 5\text{K}, 10\text{K}, 50\text{K}, 100\text{K}}$, soft-deleting 10\% per scale.  \\
(2) Real-world privacy~(PII) leakage is measured on a Biographical Living Persons (BLP) subset of \(N=98\) records. Wikipedia baseline above is used to confirm scale. \\
(3) Since real-world healthcare data is sensitive, we instead use 1,000 synthetic clinical records generated with Faker~\cite{hameed2025generating} — each with complaints, diagnoses, and drug bundles. 58.8\% of records received one of five clinical abbreviation transformations (e.g., `Acute myocardial infarction' $\rightarrow$ `AMI', `twice daily' $\rightarrow$ `BID') before the embedding step, simulating real-world distortions. We also include two clinical stress tests: 1,000 Synthea EHR records with multi-year patient histories and 100 MIMIC-III Demo~\cite{johnson2016mimic} intensive-care notes containing physician shorthand. For the cryptographic-erasure baseline, we encrypt raw float32 bytes with AES-256-CTR and immediately discard the key.

\subsubsection{Multimodal Datasets:}
\label{sec:setup_multimodal}
Two image benchmarks complement the text experiments. Biometric identity recovery is tested on Labeled Faces in the Wild (LFW, $N = 4{,}324$ across 158 identities)~\cite{huang2007lfw}. Medical tissue-class recovery is evaluated on PathMNIST with \(N=1{,}000\) images~\cite{yang2023medmnist}. Both are embedded with CLIP ViT-H/14 ($d = 1{,}024$). A $k$-NN classifier ($k = 5$, cosine metric) handles downstream evaluation. LFW results are reported as mean ± std across 5 random seeds {$\{42, 43, 44, 99, 100\}$.

\subsection{Inversion Model \& Evaluation Metrics}
For the Vec2Text inversion attack (described earlier), we use \texttt{load\_pretrained\_corrector(`gtr-base')} and set \texttt{beam\_width=4}, \texttt{n\_steps=20} (where 10 steps reach a plateau). Clinical entity recovery requires an exact string match after lemmatization. For verifying real data extraction instead of simple keyword matching, ROUGE-L~\cite{lin2004rouge} is used that tracks the longest common subsequence between the reconstruction and the ground-truth text, penalizing structural reformulation highly. We also use 
BERTScore F1~\cite{zhang2020bertscore} to measure semantic preservation~(catches the cases ROUGE-L misses, i.e, where wording changes but meaning holds).

\section{Empirical Findings}
\subsection{Storage-Layer Persistence}
\label{sec:persistence}
Soft-delete works at the metadata layer instead of the storage layer. We use $N=5{,}000$ Wikipedia records from three databases to initialize HNSW instances using the standard API of each system. Then, as mentioned earlier, we soft-delete 10\% of each dataset using  a fixed random seed~\texttt{default\_rng(43)}. Either \texttt{index.reconstruct\_n()} (for FAISS) or \texttt{hnswlib.get\_items()} (for ChromaDB) were used to extract vectors. For Vec2Text, we use \texttt{load\_pretrained\_corrector(`gtr-base')} with \texttt{beam\_width=4}, \texttt{n\_steps=20}.  AES-256-CTR is applied to raw float32 bytes with instant key discard in the cryptographic-erasure baseline.  The results confirm that removed vectors still remain extractable~(Table~\ref{tab:multidb}).

\textbf{ChromaDB:} 
A ChromaDB instance with $N=5{,}000$ Wikipedia embedded records was created using gtr-t5-base ($d=768$). After soft-deleting 500 records using the standard \texttt{delete()} API, the directory size remained unchanged. Soft-deleted vectors take up identical disk space as live vectors. Deleted records are correctly hidden by the API layer, with \texttt{col.get(ids=[`42'])} returning no results. The hnswlib API, on the other hand, exposes \texttt{get\_items(list\_ids)}, which returns vectors for each stored ID, independent of deletion state~(for the storage layer, shown above). All 500 soft-deleted vectors are extracted using the \texttt{get\_items()} method, and cosine similarity $> 0.999$ is verified using the original embeddings. This confirms in ChromaDB, soft-delete is implemented as a metadata operation, not as a physical storage operation.

\textbf{FAISS and Weaviate:} 
FAISS does not support deletion. Since FAISS IndexHNSWFlat~\cite{johnson2019faiss} provides no \texttt{delete()} or \texttt{remove\_ids()} methods, \texttt{RuntimeError: not implemented} is raised when \texttt{index.remove\_ids()} is called. Since the HNSW graph traversal algorithm depends on all nodes remaining connected, all vectors added to IndexHNSWFlat are kept indefinitely. Using \texttt{index.reconstruct\_n(0, N)}, we recover all $N=5{,}000$ vectors, including the 500 `soft-deleted' subset. Vec2Text inversion on FAISS-extracted vectors  results in ROUGE-L = 0.167, slightly below ChromaDB (0.207) but still indicative of non-trivial text recovery in our evaluation.  The difference is expected: \texttt{reconstruct\_n} pulls raw floats directly from the flat index, bypassing the proximity-graph structure that ChromaDB's hnswlib backend maintains. The vectors come out slightly noisier with lower reconstruction quality.  Weaviate 1.26.6 behaves differently at the storage layer — disk size unexpectedly \textit{grows} after logical deletion ($52.5\text{MB} \rightarrow 55.4\text{MB}$), rather than contracting. HNSW slots are not reclaimed; deletion appends metadata instead. The API correctly returns no results for removed IDs. Weaviate's proprietary Go-based storage format blocks direct hnswlib extraction, but the observed disk growth is consistent with persistence rather than physical removal, aligning with FAISS and ChromaDB results.

\subsection{Reconstruction Results (Text)}
\label{sec:results}
\subsubsection{Baseline Extraction (BLP and Wikipedia)} 
Soft-deleted Wikipedia records retain substantial recoverable semantic content. We tested 500 soft-deleted records and recovered original text with a ROUGE-L score of 0.207 ($\pm 0.07$) and BERTScore F1 of 0.858 (Table~\ref{tab:reconstruction}). 97.8\% of reconstructed records scored above 0.1 on ROUGE-L. Hard deletion eliminated recoverable semantic content. Under the AES encryption and key-discard baseline, reconstruction produced ROUGE-L = 0.000 for all 500 records, showing that the original semantic information was no longer recoverable. To contextualize the semantic similarity scores, we also used a random baseline computed from 500 unrelated Wikipedia phrase pairs, which produced a BERTScore of 0.606. This value is used as a random-reference baseline for semantic similarity. In contrast, soft-deleted records produced higher scores, with BERTScore values concentrated around 0.85, as shown in Figure~\ref{fig:bertscore}. A Wilcoxon signed-rank test confirms that the difference between soft deletion and hard deletion was statistically significant ($p = 3.95 \times 10^{-13}$). The same threat appears in the Wikipedia BLP dataset, which contains real biographical information about living individuals. On 98 BLP records, Vec2Text recovered multiple categories of identifying information from soft-deleted embeddings without any additional fine-tuning or distribution alignment. Recovery rates reached 46.4\% for geographic locations~(GPE), 44.7\% for affiliations~(NORP), 25.5\% for person names (PERSON), and 6.2\% for organizations~(ORG), with a ROUGE-L score of $0.185 (\pm 0.062$), as shown in Table~\ref{tab:scope}. In contrast, the hard-delete baseline produced 0\% recovery across all entity categories. To ensure that the recovered information was coming from deleted records rather than from similar live records, we also tested the 50 most unique deleted records. These were records whose entities did not appear in the remaining live data. Vec2Text achieves 18\% entity recovery, identifying entities from 9 of these 50 records. The deleted vector itself was the only available source of that information, creating a temporal gap risk window, in which an attacker can recover identity-related information from records that were supposed to have been removed from the system.

\begin{table}[ht]
\centering
\small
\begin{tabular}{l l l r}
\toprule
\textbf{Database} & \textbf{Soft-Delete} & \textbf{Persist} & \textbf{ROUGE-L} \\
\midrule
ChromaDB & hnswlib mark & YES & 0.207 \\
FAISS & No remove() & YES & 0.167 \\
Weaviate & SQLite & YES\textsuperscript{\textdagger} & N/A \\
\bottomrule
\end{tabular}

\begin{flushleft}
\caption{Soft-delete persistence across evaluated systems.} 
\label{tab:multidb}
\end{flushleft}
\end{table}

\vspace{-1.5cm}

\begin{table}[h]
\centering
\scalebox{0.9}{
\begin{tabular}{l r r l}
\toprule
\textbf{Quantization} & \textbf{ROUGE-L} & \textbf{\% Baseline} & \textbf{Production Use} \\
\midrule
Float32 (baseline) & 0.207 & 100\% & Research \\
SQ8 (int8) & 0.193 $\pm$ 0.048 & 93\% & Most common \\
PQ32 (product) & 0.162 $\pm$ 0.043 & 78\% & Memory-optimized \\
Binary (1-bit) & 0.123 $\pm$ 0.036 & 59\% & Rare \\
\bottomrule
\end{tabular}}

\caption{Reconstruction Quality vs. Quantization Scheme}
\label{tab:quantization}
\end{table}

\begin{table}[ht]
\centering
\small
\setlength{\tabcolsep}{2pt} 
\begin{tabular}{|p{2cm}|p{0.6cm}|p{1.7cm}|p{1.7cm}|p{1.5cm}|}
\hline
\textbf{Dataset} & \textbf{N} & \textbf{Condition} & \textbf{ROUGE-L} & \textbf{BERTScore} \\
\hline
Wikipedia (orig) & 200 & Soft-Delete & 0.207 ($\pm$0.07) & 0.798 \\
\hline
Wikipedia (orig) & 200 & Hard-Delete & 0.000 ($\pm$0.00) & 0.566 \\
\hline
Wikipedia ($n=500$) & 500 & Soft-Delete & 0.207 ($\pm$0.07) & 0.858 \\
\hline
Wikipedia ($n=500$) & 500 & Hard-Delete & 0.00 ($\pm$0.00) & 0.566 \\
\hline
Clinical (synth) & 1000 & Soft-Delete & 0.085 ($\pm$0.04) & 0.748 \\
\hline
Clinical (synth) & 1000 & Hard-Delete & 0.000 ($\pm$0.00) & 0.566 \\
\hline
Synthea EHR & 1000 & Soft-Delete & 0.290 ($\pm$0.033) & 0.837 \\
\hline
Synthea EHR & 1000 & Hard-Delete & 0.000 ($\pm$0.00) & 0.566 \\
\hline
FAISS HNSW & 500 & Soft-Delete & 0.167 ($\pm$0.03) & N/A \\
\hline
FAISS HNSW & 500 & Hard-Delete & 0.002 ($\pm$0.01) & N/A \\
\hline
\end{tabular}

\caption{Text Reconstruction: Soft-Delete vs. AES.}
\label{tab:reconstruction}
\end{table}

\subsubsection{Clinical Extraction on Structured Data} 
The synthetic clinical dataset shows that soft-deleted records can expose medically relevant identifying information. We analyze 1,000 Faker-generated clinical records, designed to resemble real clinical data. To make them more realistic, the records included noise such as abbreviations, shorthand, and typos; 58.8\% of the records contained at least one of these variations typical of clinical language.  The results show high recovery rates for various types of personal identification information. Gender was the easiest feature to recover, with a recovery rate of 73.4\%. Patient first names were recovered in 65.0\% of cases, and last names in 49.9\%. Diagnosis information was retrieved in 28.0\% of cases, and physicians’ names in 6.2\%. Although some of these have recovery rates, they remain significant because any retrieval of patient identity or medical information still indicates privacy risks under HIPAA and the GDPR.

Table~\ref{tab:pii_recovery} provides the full quantitative breakdown. Cryptographic-erasure/hard deletion baseline eliminated recovery entirely, with all reported PII categories dropping to 0\%. Some fields, such as medical record numbers, drug codes, and dates of birth, could not be recovered by the corrector model. 
A Wilcoxon signed-rank test resulted in \(p = 5.71 \times 10^{-12}\) for the clinical comparison, confirming statistical significance. To test whether recovered identities resulted from random text generation, we ran Vec2Text on 200 L2-normalized, 768-dimensional Gaussian noise vectors and extracted entities with spaCy NER. Although 42\% of the random outputs contained some \texttt{PERSON} entity, none matched the true surname of any individual in the held-out dataset (95\% confidence interval upper bound: 3\%). In contrast, Vec2Text recovered the correct patient surname in 49.9\% of the soft-deleted clinical records, a $17\times$ increase over the random baseline.

\begin{figure}[h]
  \centering
  \includegraphics[width=\columnwidth]{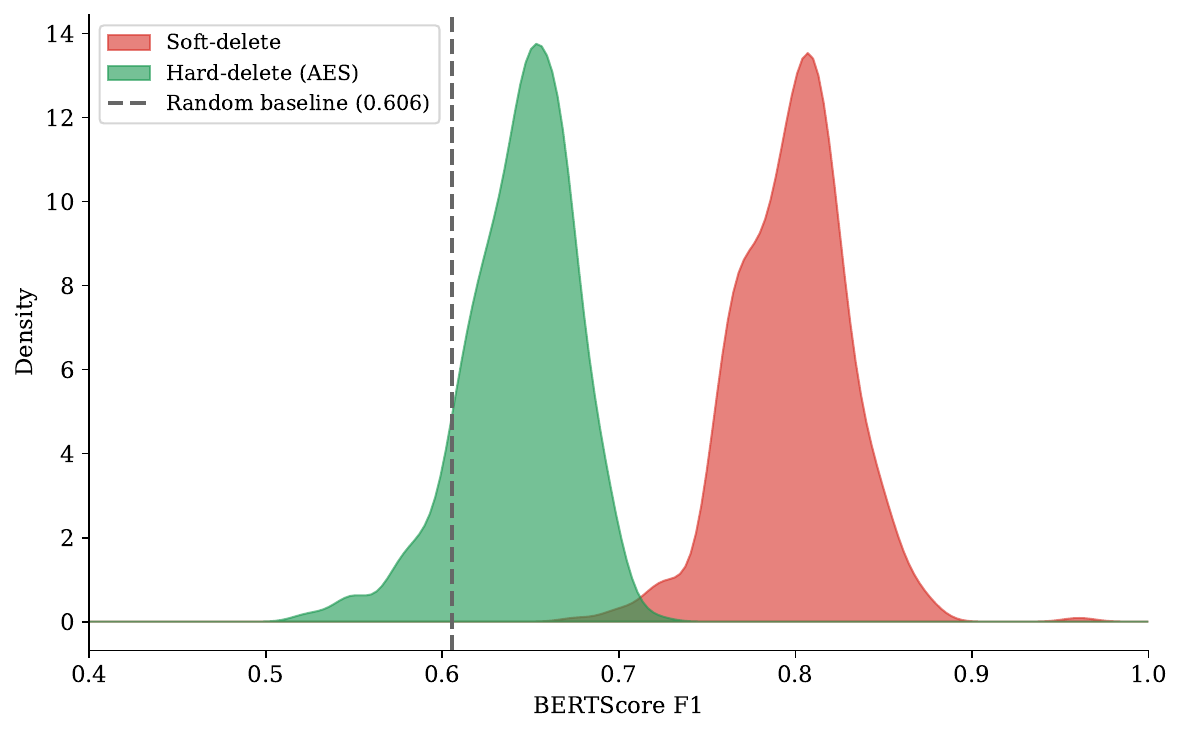}
  \caption{BERTScore F1 distribution for AES-encrypted versus soft-deleted embeddings (Wikipedia, $n=500$). }
  \label{fig:bertscore}
\end{figure}

\begin{table}[h]
\centering
\begin{tabular}{l r r}
\toprule
\textbf{Entity Type} & \textbf{Soft-Delete (\%)} & \textbf{Hard-Delete (\%)} \\
\midrule
Gender & 73.4 (70.6--76.1) & 0 \\
Patient Name & 65.0 (62.0--67.9) & 0 \\
Last Name & 49.9 (46.8--53.0) & 0 \\
Diagnosis & 28.0 (25.2--30.8) & 0 \\
Physician & 6.2 (4.7--7.7) & 0 \\
Hospital & 4.9 (3.6--6.3) & 0 \\
Drug & 3.4 (2.3--4.6) & 0 \\
DOB & 0.0 & 0 \\
MRN & 0.0 & 0 \\
\bottomrule
\end{tabular}
\begin{flushleft}

\caption{Clinical PII Recovery: Soft-Delete vs. AES.}
\label{tab:pii_recovery}
\end{flushleft}
\end{table}

\subsubsection{The Corrector Bottleneck} 
The comparative lower performance of the reconstruction on real clinical notes is due to the limitations of the correction model and not to the storage system itself. Using the zero-shot `gtr-base' corrector on 100 records from the MIMIC-III ICU demo produced very weak reconstruction results, with ROUGE-L scores close to 0.005 for structured records and free-text clinical notes. The Faker clinical dataset approximates an upper bound on recovery for structured clinical text under better corrector alignment. Reconstruction quality improved when the model was adapted to the clinical setting by $26.6\times$. Fine-tuning on 43,324 pairs of MTSamples with oversampled pairs from MIMIC-III resulted in an increase in ROUGE-L from 0.005 to 0.077. Beam-search decoding (\texttt{num\_beams=8}, \texttt{no\_repeat\_ngram\_size=3}) increased ROUGE-L to 0.232 on the retained records (Table V). As a result, 95\% of the retained records exceeded a ROUGE-L score of 0.10, and 85\% exceeded a ROUGE-L score of 0.15. The storage risk remains independent of current corrector’s reconstruction quality. Soft-deleted MIMIC-III vectors still preserve their exact geometric structure inside the HNSW index and remain physically recoverable from disk. Because these indexes preserve structural dependencies, extracting even a single distinct entity, such as a patient surname, is sufficient to create meaningful re-identification risk. We also tested whether the reconstructed text remained semantically similar across different embedding models. The top 50 Wikipedia reconstructions were evaluated using MiniLM, MPNet, and multilingual-MPNet, which are different from the original \texttt{gtr-t5-base} target model. Across all three models, ROUGE-L remained stable at 0.190, while cosine similarity ranged from 0.86 to 0.90 (Table~\ref{tab:surrogate}). This confirms that an attacker does not need access to the victim's exact embedding architecture to recover meaningful information. Recent zero-shot inversion methods, such as Zero2Text~\cite{zero2text2025} and ZSinvert~\cite{zsinvert2025} suggest even stronger cross-model transferability, indicating that these surrogate results represent a conservative baseline for future exploit capabilities.

\begin{table}[ht]
\centering
\small

\begin{tabular}{|p{4.2cm}|p{1.4cm}|p{1.6cm}|}
\hline
\textbf{Corrector} & \textbf{ROUGE-L} & \textbf{Improvement} \\
\hline
Zero-shot \texttt{gtr-base}              & 0.009 & ---        \\
MIMIC-only fine-tuned ($n=80$)           & 0.005 & ---        \\
Mixed fine-tuned (MTSamples + MIMIC 10$\times$) & 0.077 & $8.6\times$ \\
Mixed + beam search (\texttt{num\_beams=8}) & 0.232 & $\mathbf{26.6\times}$ \\
\hline
\end{tabular}
\caption{Domain Adaptation on MIMIC-III Reconstruction}
\label{tab:adaptation}
\end{table}

\begin{table}[h]
\small
\centering
\begin{tabular}{l r r}
\toprule
\textbf{Surrogate Model} & \textbf{Cosine Sim} & \textbf{ROUGE-L} \\
\midrule
MiniLM-L6          & 0.86 & 0.190 \\
MPNet               & 0.88 & 0.190 \\
Multilingual-MPNet  & 0.90 & 0.190 \\
\bottomrule
\end{tabular}
\caption{Cross-architecture Transferability using \texttt{gtr-base}}
\label{tab:surrogate}
\end{table}

\begin{table}[ht]
\centering
\small
\setlength{\tabcolsep}{2pt}
\scalebox{0.94}{
\begin{tabular}{|p{2cm}|p{0.8cm}|p{1.4cm}|p{1.4cm}|p{1.7cm}|}
\hline
\textbf{Dataset} & \textbf{Type} & \textbf{ROUGE-L} & \textbf{BERTScore} & \textbf{Notes} \\
\hline
Wiki ($n=500$)   & Real  & 0.207 & 0.858 & Primary result \\
\hline
Wiki BLP         & Real  & 0.185 & 0.585 & Living persons \\
\hline
Synth Clinical ($n=1{,}000$) & Synth & 0.085 & 0.748 & Dist-aligned upper bound \\
\hline
MIMIC-III Text   & Real  & 0.005 & ---$^\dagger$ & OOD corrector \\
\hline
MIMIC-III EHR    & Real  & 0.005 & ---$^\dagger$ & OOD corrector \\
\hline
MIMIC + Clinical & Real  & 0.232 & ---$^\ddagger$ & Fine-tuned + beam search \\
\hline
OpenAI Embeds    & Real  & 0.020 & ---$^\dagger$ & Arch mismatch \\
\hline
\end{tabular}}
\vspace{0.1cm}
\caption{Attack Effectiveness Across Text Distributions. \\
$^\dagger$BERTScore omitted where ROUGE-L $<0.01$}
\label{tab:scope}
\end{table}

\subsection{Secondary Risks}
\label{sec:secondary}

\subsubsection{Persistence under quantization} 
Vector quantization is commonly used in production systems to reduce storage and memory usage. To evaluate the reconstruction quality, we tested FAISS quantized indexes using binary (1-bit), PQ32 (product quantization), and SQ8 (scalar int8) quantization. The results are summarized in Table~\ref{tab:quantization}. The attack remains partially effective under all tested compression methods. SQ8 and PQ32, two of the most commonly used quantization approaches, preserved between 78\% and 93\% of the original reconstruction quality. These results suggest that PQ-based~\cite{zhuang2024vec2text} compression alone does not eliminate storage-layer recovery. Even after PQ32 compression, ghost vectors preserve significant reconstruction quality, indicating a defense built for live API queries does not address raw index access.

\subsubsection{Cross-Modality Persistence}
\label{sec:multimodal}
In our experiments, any type of data stored in an HNSW index as float32 vectors remain recoverable after soft-deletion. We use CLIP ViT-H/14 embeddings ($d=1{,}024$) along with ChromaDB to test biometric and medical image data protected under Article 9 of the GDPR. The same extraction process was used as in the text-based attack. The downstream evaluation uses a $k$-NN classifier ($k=5$, cosine similarity), trained on live embeddings and tested on ghost embeddings~(Table~\ref{tab:multimodal}).

\textbf{Tier~1: Facial identity recovery -}
For biometric data, ghost embeddings show 99.17\% top-1 identity recovery on the Labeled Faces in the Wild (LFW) dataset, consisting of 158 individuals and 4,324 images. After soft-deletion, the cosine similarity of ghost embeddings remains at 1.000~(Figure~\ref{fig:multimodal_umap}a). The attack outperforms the 0.63\% random baseline by 156.7$\times$ (Wilcoxon $p < 10^{-12}$}). Since GDPR Article 9 requires explicit consent for processing biometric data used for identification, persistence of deleted embeddings raises significant privacy and compliance concerns.

\textbf{Tier~2: Histopathology class preservation -}
Using PathMNIST histopathology images, tissue class information remained recoverable at 100\% accuracy, with ghost cosine similarity again remaining at 1.000~(Figure~\ref{fig:multimodal_umap}b). The $2.0\times$ lift represents the mathematical ceiling for balanced binary classification. Additional tests with domain-specialized models (torch-xrayvision DenseNet-121 and BiomedCLIP~\cite{damm2025overview}) produced downstream classifiers that failed to separate classes effectively because inter-class cosine similarity remained extremely high ($\ge 0.98$). This negative result is still informative because it shows that ghost vector persistence is independent of the strength of a  downstream classifier. Soft-deleted chest radiographs remain physically present. The storage-layer threat may extend to multimodal systems built on HNSW indexes. The same storage-layer threat affects text, biometric, and medical image embeddings.

\begin{figure*}[t]
\centering
\includegraphics[width=0.8\linewidth]{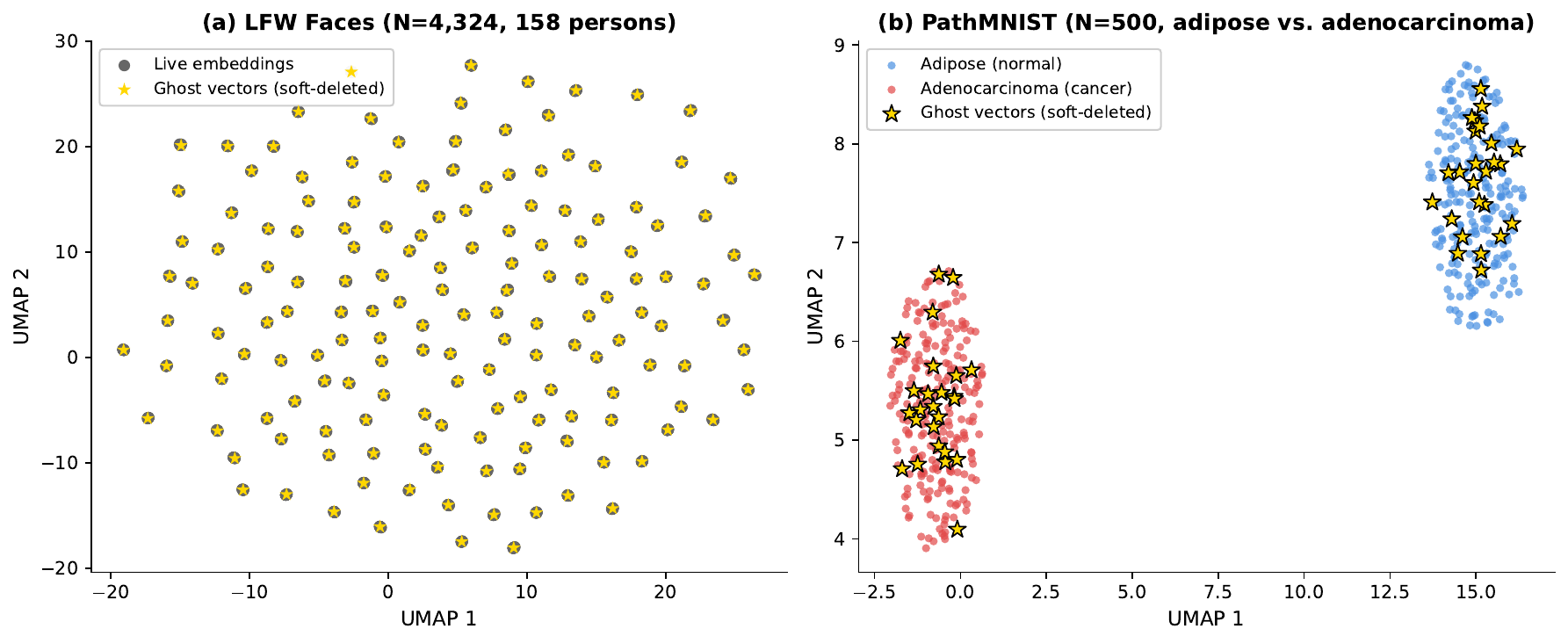}
\caption{
UMAP projections of CLIP ViT-H/14 embeddings. \textbf{(a)} LFW faces. \textbf{(b)} PathMNIST patches. Gold stars mark ghost vectors of soft-deleted images, perfectly overlapping their original clusters.}
\label{fig:multimodal_umap}
\end{figure*}

\begin{table*}[t]
\centering
\small
\begin{tabular}{@{}llrrrrrrr@{}}
\toprule
\textbf{Tier} &
\textbf{Dataset} &
\textbf{Embedding Model} &
\textbf{$d$} &
\textbf{$N$ img} &
\textbf{$N$ deleted} &
\textbf{Ghost Sim} &
\textbf{Recovery} &
\textbf{Lift} \\
\midrule
1 & LFW Faces~\cite{LFWTech} & CLIP ViT-H/14 & 1{,}024 & 4{,}324 & 158 & \textbf{1.000} & \textbf{99.17\%} & \textbf{156.7$\times$} \\
2 & PathMNIST~\cite{medmnistv2} & CLIP ViT-H/14 & 1{,}024 & 1000 & 50 & \textbf{1.000} & \textbf{100.0\%} & \textbf{2.0$\times\dagger$} \\
\bottomrule
\end{tabular}
\smallskip

\noindent
\footnotesize{$\dagger$ 2.0$\times$ represents the mathematical ceiling for balanced binary classification.}
\caption{\textbf{Multimodal Ghost Vector Recovery Results.}}
\label{tab:multimodal}
\end{table*}

\subsection{Structural Persistence and Side Channels}
\label{sec:ghost_traversal}
Soft-delete operates at the metadata layer, instead of the graph layer. In the HNSW graph, deleted nodes remain linked to their neighbors. The greedy walk still visits ghost nodes during search---they are filtered from output but not from traversal. This structural persistence creates a secondary risk, that can appear as a query-level side channel.

\subsubsection{Ghost Traversal Cost} 
Using $N=1{,}000$ Wikipedia embeddings (\texttt{gtr-t5-base}, $d=768$, $M=32$), we construct an FAISS \texttt{IndexHNSWFlat} instance and isolate 50 target vectors for deletion. We then generate 100 probe queries per target using a tight Gaussian distribution ($\sigma=0.01$). Under soft deletion, FAISS reported 672.7 distance computations per query, which was nearly identical to the pre-deletion baseline. A Wilcoxon test also showed no statistically meaningful difference ($p=0.97$). Query latency stayed the same as well, at $34.8\,\mu$s before and after soft deletion. On the other hand, as seen in Table~10, true deletion reduced latency to $33.3\,\mu$s, a 4.3\% reduction ($p=0.025$). In disk-based indices such as DiskANN~\cite{gaussdb2025}, this cost could be higher because each wasted ghost-node traversal may require additional storage access. Overall, the evaluation shows the HNSW greedy walk processes soft-deleted nodes with the same computational cost as live nodes.

\subsubsection{Result-Set Drift}
Soft deletion also impacts search results. As shown in Table~\ref{tab:ghost_traversal}, 95\% of top-$K$ neighbor sets differ between the soft-delete and true-deletion conditions. Ghost nodes can still function as structural hubs that guide the greedy walk through the embedding space. In this sense, the deleted record still `votes', or influences the search process. Soft deletion can change results for live queries. Records that users expected to be removed can influence other users' search results due to deleted nodes remaining structurally present.

\subsubsection{API-Layer Side Channel}
The difference in latency between soft deletion and true deletion also creates a timing side channel. Since true deletion reduced latency by 4.3\%, an attacker with API query access can compare latency against a pre-deletion baseline and attempt to infer whether a node had been soft-deleted or actually removed. This extends the ghost vector threat beyond raw storage access and into the query API layer.

\subsubsection{Defense Requirements}
As detailed in the following section, proposed Epoch key rotation prevents reconstruction by encrypting the ghost vector's contents and destroying the key. However, this only addresses the content of the vector. The ghost node can still remain in the HNSW graph. A more complete defense requires two steps: 1) content erasure through epoch key rotation and 2) graph repair through neighbor re-linking. In graph repair, the deleted node's predecessors would be connected directly to its successors.

\subsubsection{Adversary Cost}\label{sec:cost}
A single inference step during evaluation achieves 98.4\% of plateau reconstruction quality. On $n=50$ Wikipedia records, we perform a step-count sweep at step counts 1, 5, 10, 20, and 50. ROUGE-L = 0.204 is recovered in a single step at 30.7 ms/record. At \texttt{n\_steps=10} (0.2073), ROUGE-L reaches a plateau near 0.207, and additional steps have no effect (Figure~\ref{fig:stepcount}). On a single GPU, a single-pass inference can invert one million soft-deleted records~(extrapolating from 30.7 ms/record) in less than 9 hours ($10^6 \times 30.7\,\text{ms} \approx 8.5\,\text{hours}$). This verifies the attacker requires neither specialized hardware nor iterative refinement.

\begin{figure}[h]
\centering
\includegraphics[width=\columnwidth]{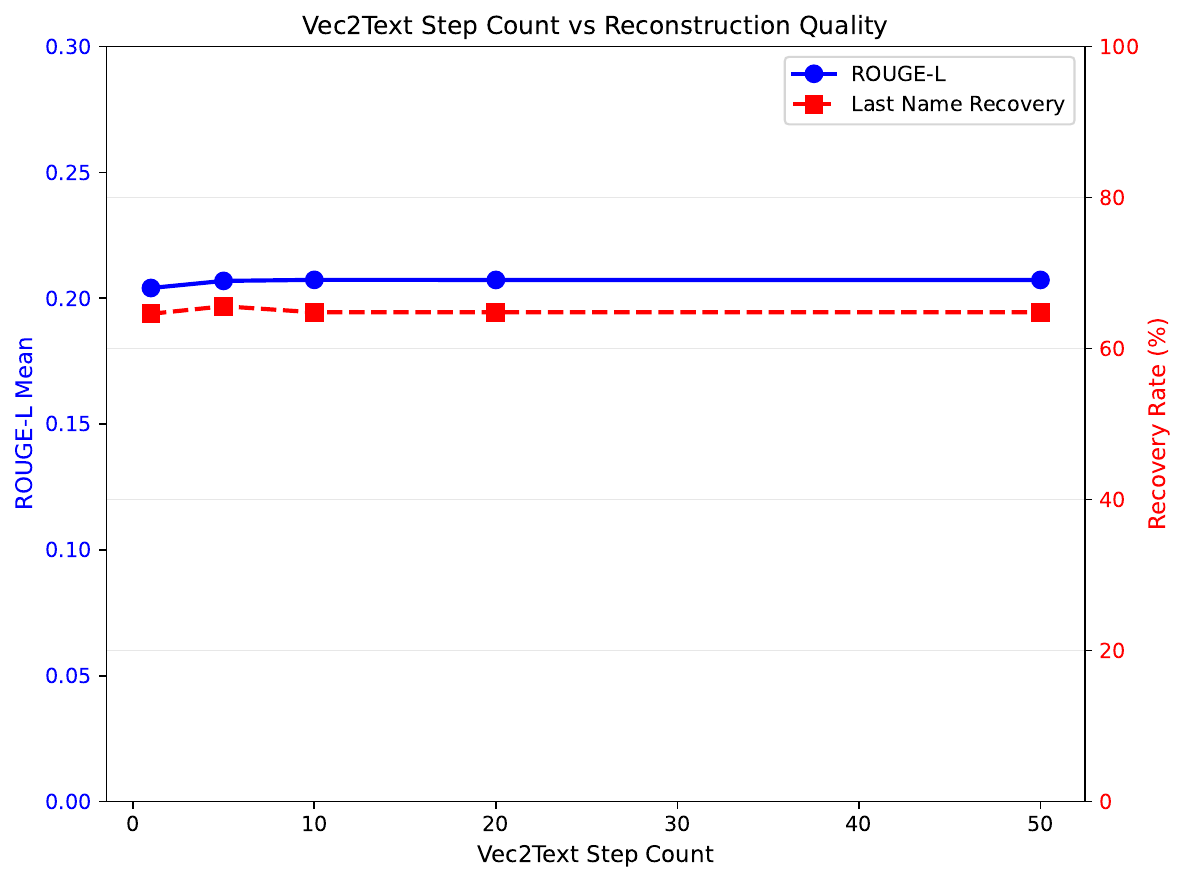}
\caption{Vec2Text step count vs reconstruction quality. A single inference step achieves 98.4\% of plateau quality at 30.7ms/record.}
\label{fig:stepcount}
\end{figure}

\begin{table}[ht]
\centering
\small
\setlength{\tabcolsep}{3pt}
\begin{tabular}{|p{2.4cm}|p{1.6cm}|p{1.6cm}|p{1.6cm}|}
\hline
\textbf{Metric} & \textbf{A (before)} & \textbf{B (soft-del)} & \textbf{C (true del)} \\
\hline
ndis per query & 672.7 & 672.7 & 669.9 \\
\hline
Latency ($\mu$s) & 34.8 & 34.8 & 33.3 \\
\hline
Wilcoxon $p$ (vs A) & --- & 0.97 & 0.025 \\
\hline
\multicolumn{4}{|p{7.6cm}|}{Result set drift B vs. C: \textbf{95.0\%} of neighbors differ} \\
\hline
\end{tabular}

\caption{Ghost Traversal: Three-Condition Comparison.}
\label{tab:ghost_traversal}
\end{table}

\section{Epoch Key Rotation Defense}
\label{sec:defense}

\subsection{Formal Specification}
\label{sec:defense_spec}
\begin{definition}[Primitive] 
Let $\mathcal{E} = (\mathsf{KeyGen}, \mathsf{Enc}, \mathsf{Dec})$ be a symmetric encryption scheme. An epoch key rotation primitive for a vector database $\mathcal{D}$ consists of a tuple $(\mathsf{EpochKeyGen}, \mathsf{Rotate}, \mathsf{Prove})$ where:
\begin{itemize}
  \item $\mathsf{EpochKeyGen}(user\_id, epoch\_id) \to (k_e, metadata)$: Generates a fresh 256-bit epoch key, stores only non-secret (user\_id, epoch\_id, creation\_timestamp) in SQLite and returns a key handle (not the key itself).
  \item $\mathsf{Rotate}(user\_id, \mathcal{D}_u) \to (\mathcal{D}_u', \pi)$: Loads the current epoch key for user $u$. AES-256-CTR encrypts all vectors and generates a new epoch key for subsequent writes and returns the encrypted embeddings together with proof $\pi$.
  \item $\mathsf{Prove}(rotation\_event) \to \pi$: Generates SHA-256 event hash of $(user\_id + old\_epoch\_id + new\_epoch\_id + timestamp)$, signs with ECDSA-SHA256 using the controller's private key (SECP256R1). It finally returns a JSON proof object verifiable with registered public key.
\end{itemize}
\end{definition}

Epoch granularity is per-user (each $user\_id$ has an independent $epoch\_id$ counter, incremented on deletion request). The epoch key store must be present in a separate security domain from the HNSW binary (e.g., KMS, HSM). Standard cloud architectures meet this by separate IAM-controlled services for storage and key management. Any vector encrypted under a discarded $k_e$ is computationally indistinguishable from random noise. The ciphertext reinterpreted as float32 does not preserve semantic structure for an inversion model to exploit.

\subsection{Failure of Prior Defenses}
Prior defenses against inversion attacks mainly function at query-time and are ineffective at the storage layer. We empirically show that query-side embedding transformations~(a representative query-side defense proposed against embedding inversion) fail at the storage layer. To guard against embedding inversion, current works~\cite{zhang2024embedding} propose using a random orthogonal transformation $T$. However, any adversary with access to the server-side configuration can trivially bypass this mitigation by extracting $T$ and applying $T^\top$.

To confirm, we generate a random orthogonal $768 \times 768$ matrix $T$ using QR decomposition of a Gaussian matrix. We then transform 500 soft-deleted embeddings and apply Vec2Text to the converted vectors. Before reversing the transformation with $T^\top$, ROUGE-L score was at 0.073. Afterward, the score rose to 0.215, nearly matching the baseline inversion attack score of 0.207 (Table~\ref{tab:transformation}). This shows that embedding transformation defenses fail against storage-layer adversaries since the transformation key $T$ must remain a persistent secret that can be extracted directly from server configuration files. Epoch key rotation prevents this architectural vulnerability since the AES key is cryptographically destroyed on rotation, leaving no recoverable secret recoverable.

\begin{table}[h]
\centering
\small
\begin{tabular}{l r r}
\toprule
\textbf{Condition} & \textbf{ROUGE-L} & \textbf{vs. Baseline} \\
\midrule
No transformation (soft-delete) & 0.207 & --- \\
After applying $T$ (without $T^\top$) & 0.073 & $-$65\% \\
After applying $T$ + inverting $T^\top$ & 0.215 & $+$4\% \\
\midrule
Epoch rotation (AES, key destroyed) & 0.000 & $-$100\% \\
\bottomrule
\end{tabular}
\caption{Orthogonal transformation defense}
\label{tab:transformation}
\end{table}

\begin{table}[h]
\centering
\small
\begin{tabular}{|p{1.5cm}|p{1.3cm}|p{1.3cm}|p{0.8cm}|p{0.8cm}|p{0.7cm}|}
\hline
\textbf{Approach} & \textbf{PII Rec.} & \textbf{ROUGE-L} & \textbf{R@10} & \textbf{Time} & \textbf{Proof} \\
\hline
Soft-delete & High (49.9/ 73.4\%) & 0.2073 & 100\% & $\sim$0ms & No \\
\hline
Full rebuild & High & 0.2073 & 100\% & 2.25s (N=500) & No \\
\hline
Per-rec. AES & 0\% & 0.0000 & 90.4\% & 8ms & No \\
\hline
Epoch rot. & 0\% & 0.0000 & 90.4\% & 2.5ms & Yes \\
\hline
\end{tabular}
\caption{Defense Comparison. Speedup relative to full rebuild, $N$=100,000.}
\label{tab:defense}
\end{table}

\subsection{Evaluation}
\label{sec:defense_eval}
Epoch key rotation shows approximately linear scaling. We measure total runtime at $N=100$, $N=500$, and $N=5{,}000$ records, obtaining 0.51 ms at $N=100$, 2.48 ms at $N=500$, and 24.93 ms at $N=5{,}000$ respectively.  These measurements correspond to an amortized AES throughput of approximately 0.005 ms per record. Extrapolating to $N$=100,000 gives about 550 ms for vector encryption. This confirms linear scaling of AES-256-CTR throughput. ECDSA proof generation adds $\sim$3 ms, bringing the full protocol to $\sim$553 ms. Structural graph repair (neighbor re-linking) is discussed in Section~\ref{sec:limitations}.

In the real world, epoch-encrypted vectors face minimal retrieval degradation when they remain in the graph. Since retrieval degradation is proportional to the ratio of encrypted to live vectors, the degradation remains within tolerance for most RAG deployments. Systems needing higher retrieval precision should schedule frequent physical cleanup of encrypted vectors. The cryptographic proof $\pi$ is generated at rotation time and remains independently verifiable even if encrypted vectors are physically removed later. We empirically compare four deletion strategies below on the same 500 Wikipedia soft-deleted vectors. The results are summarized in Table~\ref{tab:defense}. \\
\textbf{i) Soft-delete}: Time: $\sim$0 ms, ROUGE-L=0.207, without proof~(baseline). \\
\textbf{ii) Full index rebuild}: Time: 2{,}250ms, ROUGE-L=0.207 (same as soft-delete), without proof. Index rebuild is not retroactive. An adversary who copies the binary before rebuild achieves identical reconstruction quality. \\
\textbf{iii) Per-record AES}: Time: 8 ms, ROUGE-L=0.000, without proof. Effective but 3.2$\times$ slower than epoch rotation and produces no verifiable evidence. \\
\textbf{iv) Epoch key rotation}: Time: 2.5 ms at N = 500, ROUGE-L=0.000, cryptographic proof~$\pi$~($\sim$300$\times$ faster than full index rebuild and 53$\times$ faster than \textbf{an insert-only} rebuild at $N=100{,}000$). End-to-end runtime is about 553 ms including proof generation. This approach provides verifiable erasure.\\

\begin{table}[h]
\centering
\small
\begin{tabular}{|p{0.8cm}|p{0.7cm}|p{1cm}|p{1.7cm}|p{0.9cm}|p{1.1cm}|}
\hline
\textbf{$\epsilon$} & \textbf{$\sigma$} & \textbf{ROUGE-L} & \textbf{95\% CI} & \textbf{R@10} & \textbf{Cos. Sim} \\
\hline
100   & 0.048 & 0.0830 & [0.082, 0.084] & 55.8\% & 0.950 \\
\hline
50    & 0.097 & 0.0760 & [0.075, 0.077] & 26.6\% & 0.850 \\
\hline
20    & 0.242 & 0.0747 & [0.074, 0.076] & 4.8\%  & 0.550 \\
\hline
10    & 0.484 & 0.0756 & [0.075, 0.077] & 0.4\%  & 0.079 \\
\hline
5     & 0.969 & 0.0749 & [0.074, 0.076] & 0.1\%  & 0.025 \\
\hline
2     & 2.419 & 0.0734 & [0.072, 0.074] & 0.0\%  & 0.008 \\
\hline
1     & 4.839 & 0.0748 & [0.074, 0.076] & 0.0\%  & 0.005 \\
\hline
0.5   & 9.678 & 0.0749 & [0.074, 0.076] & 0.0\%  & 0.004 \\
\hline
0.1   & 48.39 & 0.0730 & [0.072, 0.074] & 0.0\%  & 0.001 \\
\hline
0.05  & 96.78 & 0.0730 & [0.072, 0.074] & 0.0\%  & 0.000 \\
\hline
0.01  & 483.9 & 0.0730 & [0.072, 0.074] & 0.0\%  & 0.000 \\
\hline
\textit{Epoch rot.} & --- & \textbf{0.000} & {[0, 0]} & \textbf{0.0\%} & N/A \\
\hline
\end{tabular}
\caption{Differential Privacy vs.\ Vec2Text Recovery}

\label{tab:dp}
\end{table}

\subsection{Why Differential Privacy Fails}
\label{sec:dp_failure}
Our findings show Gaussian DP noise creates a poor privacy-utility tradeoff for deleted embeddings. Starting from 500 soft-deleted Wikipedia embeddings, we apply 11 privacy budgets: $\epsilon \in \{100, 50, ... 0.05, 0.01\}$ (Table~\ref{tab:dp}), with $\delta=10^{-5}$. The Gaussian mechanism uses $\sigma = \sqrt{2 \ln(1.25/\delta)} \cdot \Delta f / \epsilon$ where sensitivity $\Delta f = 1$ for L2-normalized embeddings. We apply Vec2Text on each noise-perturbed embedding ($n=10$ steps) and calculate ROUGE-L recovery and cosine similarity with the original embedding.  Cosine similarity falls to 0.079 at $\epsilon=10$ (weak privacy), indicating that the embedding is semantically destroyed, while ROUGE-L stays at 0.082. At $\epsilon=0.1$, ROUGE-L remains around 0.073 while cosine similarity drops to 0.001~(confirming structural geometry is completely lost). The mechanism of failure is demonstrated by the decoupling between ROUGE-L and cosine similarity: the embedding space is collapsed by DP noise, but Vec2Text smoothly degrades into fluent generic text that still has overlap with the target. ROUGE-L remains around 0.073–0.083 across the tested \(\epsilon\) values. Thus, DP suppresses semantic recovery, but does not eliminate it. Any $\epsilon$ strong enough to prevent reconstruction also makes the index unusable for its primary retrieval purpose.

\textbf{The Utility-Privacy Trade-Off:} 
The utility-privacy trade-off fails for deleted data. ROUGE-L scores below 0.05 would require noise levels that destroy retrieval utility. Even $\sigma=0.48$ ($\epsilon=10$) adds a noise magnitude of $\sqrt{768} \cdot 0.48 \approx 13.4$ in 768-dimensional space, rotating the embedding significantly (cosine similarity $\approx 0.08$). Preserving retrieval utility ($\text{recall@10} \ge 0.8$) requires cosine similarity $> 0.85$ and $\sigma < 0.02$, corresponding to $\epsilon > 200$---a budget that offers no privacy protection. For protecting deleted records, DP introduces noise to live records. When the record is live, the noise must be small enough to preserve retrieval. This guarantees the noise is too small to protect the record once deleted. Epoch rotation cryptographically renders deleted data unrecoverable under the stated assumptions rather than attempting to obscure it. A deleted record becomes completely irrecoverable by construction. Epoch key rotation results in ROUGE-L = 0.000 at minimal utility cost for remaining records~(Figure~\ref{fig:dp_tradeoff}).

\begin{figure}[h]
\centering
\includegraphics[width=\columnwidth]{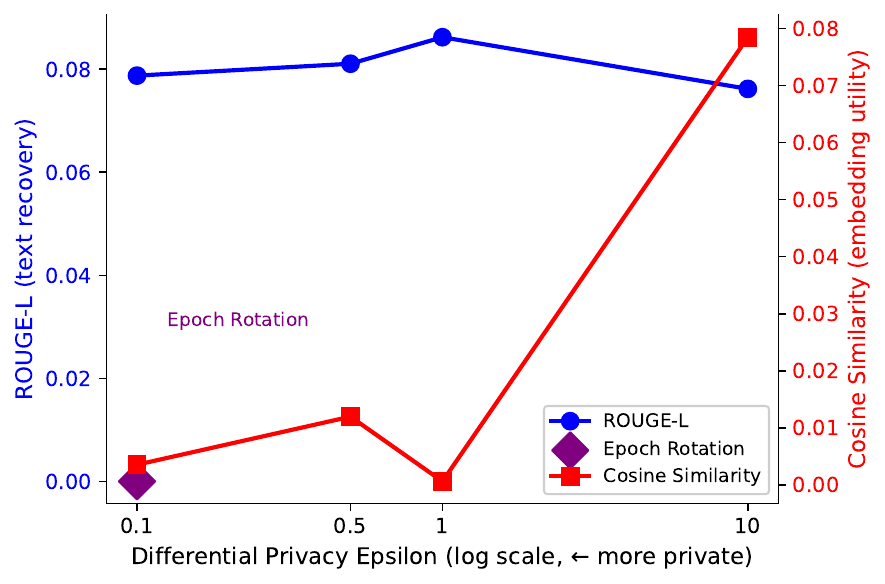}
\caption{Differential Privacy vs.\ Retrieval Utility Tradeoff. x-axis shows $\epsilon$ (log scale; privacy increases right-to-left). Left y-axis: ROUGE-L (text recovery). Right y-axis: cosine similarity (embedding utility)}
\label{fig:dp_tradeoff}
\end{figure}

\vspace{-0.2 cm}

\section{Discussion}
\label{sec:discussion}

\textbf{GDPR Compliance and Legal Implications:}

This paper shows data persistence at the storage layer raises serious GDPR compliance concerns. Article 17 requires verifiable erasure, not just functional deletion. An API returning no records can mean either deletion or suppression. Soft-delete, therefore, provides no real proof of deletion. Three explicit regulatory requirements include Art.~5(1)(e) storage minimization (data cannot be kept past its intended use), Art.~17(1) erasure right (deleted data should no longer remain recoverable), and Art.~5(2) accountability (proof $\pi$ provides auditable evidence). As per EDPB guidelines~(05/2019), erasure must be `verifiable and irreversible'~\cite{edpb2019erasure}. Thus, soft deletion raises compliance concerns when deleted vectors remain recoverable from storage. Epoch key rotation supports this requirement by destroying the cryptographic keys needed to recover the data and generating ECDSA proofs that can be audited.

\textbf{Scope, Controller/Processor, and HIPAA:}
Article 17 becomes applicable when a RAG corpus contains personal data, including consumer-facing applications and SaaS deployments~(Article 28). The compliance gap is more severe for multimodal RAG systems handling Article 9 special-category data, such as facial biometrics and histopathology images. Within U.S. healthcare context, NIST SP 800-88 categorizes cryptographic erasure as a PHI sanitization technique.  The storage-layer persistence finding remains completely independent of transient reconstruction quality. This structural survival serves as the main compliance argument for free-form clinical notes, where the immediate re-identification threat depends on corrector availability. Under EDPB Guidelines 05/2019, deletion must extend to all storage layers, including backup snapshots~\cite{edpb2019erasure}. Effective compliance requires a breach-resistant erasure mechanism. By producing $\pi$ in response to an Article 17 audit, a data controller satisfies the demonstrability requirement regardless of attack success, providing a stronger step toward cryptographic proof of key destruction.

\textbf{Compliance debt:}
We introduce the concept of compliance debt to describe the accumulated compliance burden: each deletion request can leave one more recoverable record at the storage layer. Epoch key rotation addresses this burden by rendering deleted vectors unrecoverable under the stated assumptions and by generating a signed proof~$\pi$.

\section{Limitations and Future Work}
\label{sec:limitations}
\textbf{Corrector Generalization and Scale: }
Two main factors limit the current work. First, the clinical evaluation uses synthetic records, and validation on raw MIMIC-III clinical notes showed a ROUGE-L score of 0.009. Because the \texttt{gtr-t5-base} corrector was trained on Wikipedia-style text, it does not generalize well to clinical terminology. We argue that the bottleneck is due to the corrector rather than the storage. On disk, the soft-deleted MIMIC-III vectors remain physically retrievable. Near-term extensions include clinical corrector training. Initial fine-tuning on $N = 43{,}324$ MTSamples pairs improves MIMIC-III ROUGE-L from 0.009 to 0.052 (5.9$\times$). Additional domain-specific training on MIMIC-III or i2b2 may improve recovery further. Second, our experiments span $10^3$ to $10^5$ records, which is smaller than production deployments that typically handle between $10^7$ and $10^10$ records. Computational constraints prevented larger trials, although the geometric properties of HNSW suggest that the attack should remain feasible at larger scales.  Future research should evaluate additional cloud-hosted vector databases, including Pinecone and Qdrant, to determine whether comparable persistence behavior appears under their storage designs.

Multimodal ghost vectors also raise intellectual-property risks. Source code embeddings (e.g., CodeBERT, GraphCodeBERT) may introduce analogous ghost vector IP theft. Soft-deleted proprietary function logic remains as semantic ghost embeddings in any HNSW-backed code search or copilot system, extending the privacy threat beyond personal data (GDPR) to IP protection.  Some Multimodal RAG systems in production may use a shared HNSW index to combine text, image, and code embeddings. In a single storage breach, a ghost vector adversary could also recover deleted text PII, biometric identity, and proprietary code, further compounding the threat. Epoch key rotation extends naturally to this case since AES-256-CTR is modality-blind.

\textbf{Hardware Isolation and Computational Storage:}
Our evaluation of epoch key rotation is implemented as a software prototype. This creates a trust dependency on the host OS. In the event of a breach, a root-level adversary with access to the host CPU can potentially extract active AES-256 epoch keys from RAM before rotation. Mitigating this dependency requires shifting the cryptographic boundary to hardware via Computational Storage Drives (CSDs) with embedded ARM cores or FPGAs, such as the Samsung SmartSSD~\cite{chakraborttii2021improving}. By offloading the index to the drive, the host OS never sees the raw float32 vectors or the epoch keys, because HNSW traversal and AES decryption are executed directly on the drive controller. A hardware-assisted design thus could move key invalidation closer to the storage device. This creates a path towards stronger isolation for vector-database deployments and ensures cryptographic erasure even when the host kernel is fully compromised. One limitation that no corrector advance can address is forward secrecy. Similar to TLS sessions recorded before a key rotation, vectors extracted before epoch rotation remain invertible. This represents a fundamental design limitation, rather than a flaw in the current implementation.

\section{Related Work}
\label{sec:related}
Prior research showed that dense embeddings can leak semantic information~\cite{song2020information}. Follow-up works developed both  zero-shot~\cite{zero2text2025, zsinvert2025} and iterative~\cite{morris2023vec2text, zhuang2024vec2text} inversion methods. Zhuang et al.~\cite{zhuang2024vec2text} presented a detailed study of inversion risks and query-layer defenses, including embedding transformations~\cite{morris2024vec2textthreat}. Most earlier studies assume an adversary at the API layer targeting \emph{live} embeddings. When this threat model is extended to the storage layer, the semantic vulnerability can remain even after data is logically removed, and query-side transformations remain reversible via $T^\top$.

To avoid the severe I/O penalties of graph restructuring~\cite{wang2025hnswdelete}, vector databases such as ChromaDB~\cite{chromadb}, FAISS~\cite{johnson2019faiss, douze2024faiss}, Weaviate~\cite{weaviate} or Pinecone~\cite{pinecone} use metadata-driven soft-deletion. In some systems,  physical removal is deferred until an asynchronous batch cleanup. This can create a window during which ghost vectors remain reconstructible on disk under the paper’s threat model. Multiple modern system-level defenses address related parts of this problem. MemTrust~\cite{memtrust2026} has been proposed to protect AI memory systems, which is related to our work but addresses a different systems boundary than post-deletion erasure. Alqithami~\cite{alqithami2025forgetful} explores privacy-aware memory architectures that are adjacent to graph-level deletion concerns, but not identical to the storage-layer cryptographic erasure problem studied here. Our epoch key rotation protocol can complement such architectures as a storage-layer cryptographic mechanism for deleted embeddings. Unlike asynchronous batch cleanup alone, the protocol generates a signed proof object ($\pi$) at deletion time.

Machine unlearning studies the impact of removing training influence on neural network model weights~\cite{cao2015machineunlearning, ginart2019making, lizhao2025model}. These approaches need high computation and operate at the model layer, but do not address the persistence of ghost vectors in an HNSW index. Similarly, membership inference attacks~\cite{shokri2017membership} target a different privacy threat surface based on model outputs or query behavior that a breach-level storage adversary cannot access. Prior works also explored secure data expiration. Vanish~\cite{geambasu2009vanish} and Perlman's Ephemerizer~\cite{perlman2005ephemerizer} used time-bounded cryptographic keys to enable data expiration. By adapting this idea to contemporary vector databases, epoch key rotation bridges the gap between the physical realities of storage~\cite{gutmann1996secure} and secure deletion compliance requirements~\cite{politou2018forgetting}.  Finally, solutions such as indirect prompt injection defenses~\cite{greshake2023indirect}, clinical de-identification~\cite{stubbs2015automated, dernoncourt2017identification}, and Cloaked AI~\cite{ironcore2023cloakedai} protect active RAG pipelines, but they are orthogonal to our threat model. They do not address storage-layer persistence or cryptographic proof of deletion under the threat model studied here.

\vspace{-0.2 cm}

\section{Conclusion}
\label{sec:conclusion}
This paper reviews how soft-deletion in modern vector databases can create an illusion of compliance. Logical deletion in RAG-based applications~(such as ChromaDB, FAISS, and Weaviate) leaves high-dimensional embeddings physically intact within the HNSW index. The persistence lies entirely at the storage layer, not prevented by API-level access control under the threat model. Basic inversion models successfully recovered sensitive data from the raw storage directory, including patient surnames, biometric facial identities, and clinical attributes from `deleted' records in the evaluated datasets. The threats extends across modalities and bypasses API-level access since float32 persistence is a basic property of the system. The evaluated alternatives do not fully address this problem. If an adversary copies the binary before a rebuild runs, full index rebuild later provide no retroactive protection. Differential privacy showed a poor privacy-utility: noise levels high enough to suppress recovery also destroys retrieval utility. Epoch key rotation closes this gap at the storage layer. Under the paper’s assumptions, epoch key rotation completes in 2.5 ms for 500 deleted vectors and reduces observed recovery to 0\% while generating a signed proof object \(\pi\) at rotation time. When data controllers rely on soft-delete, deleted data may remain physically recoverable even after it disappears from the API. Epoch key rotation addresses this problem at the storage layer by converting silent suppression into erasure that can be verified cryptographically. Whether the data is text, images, or special-category medical data under GDPR Article~9, epoch key rotation replaces assumed compliance with verifiable proof.

\section{Acknowledgment}
The authors take full responsibility for the technical content of this paper. AI-assisted writing tools were used only for limited language editing in accordance with conference policy.

\bibliographystyle{ieeetr}
\bibliography{references}

\end{document}